\newtheorem{theorem}{Theorem}[section]
\newtheorem{definition}{Definition}[section]
\newtheorem{lemma}{Lemma}[section]
\newcommand{\qed}{\hfill \mbox{\raggedright \rule{2mm}{3mm}}}
\newenvironment{proof}{\noindent{\bf Proof.}}{\qed}
\newcommand{\cfl}{{\sc Cfl}}
\date{}
\begin{document}

\title{
Exponential lower bounds on the size of approximate formulations in the natural encoding  for Capacitated Facility Location\thanks{
This research has been co-financed by the European Union (European
Social Fund -- ESF) and Greek national funds through the Operational
Program ``Education and Lifelong Learning'' of the National Strategic
Reference Framework (NSRF) - Research Funding Program:
``Thalis. Investing in knowledge society through the European Social Fund''.}
}

\author{Stavros G. Kolliopoulos\thanks{Department of Informatics and
Telecommunications, National and Kapodistrian 
University of Athens, Panepistimiopolis Ilissia, Athens
157 84, Greece; (\texttt{sgk@di.uoa.gr}).}   
\and Yannis Moysoglou\thanks{ 
Department of Informatics and
Telecommunications, National and Kapodistrian 
University of Athens, Panepistimiopolis Ilissia, Athens
157 84, Greece; (\texttt{gmoys@di.uoa.gr}). } }
\maketitle

\thispagestyle{empty}



\maketitle

\begin{abstract}
The metric capacitated facility location is a well-studied problem for
which, while  constant factor  approximations are known,  no efficient
relaxation  with  constant  integrality  gap is  known.  The  question
whether there is such a relaxation is among the most important open
problems of approximation algorithms \cite{ShmoysWbook}.

In this  paper we show that,  if one is restricted  to linear programs
that use the natural 
encoding for facility location, at least an exponential number of constraints
is needed  to achieve a  constant gap. Our  proof does not  assume any
special property of the relaxation such as locality or symmetry.
\end{abstract}

\section{Introduction}

In recent years there has been an increasing interest in 
characterizing the strength of linear programming relaxations
for approximating combinatorial optimization problems. In the seminal
paper of Arora et al.  \cite{AroraBLT06} the integrality gap of general
families   of   relaxations  for   the   Vertex   Cover  problem   was
studied.  These families include  relaxations with  local constraints,
relaxations with low-defect inequalities and those relaxations
obtained  after   $O(\log  n)$  rounds   of  the  Lov\'{a}sz-Schrijver
hierarchy (\cite{LovaszS91}).

Subsequently, the  idea of fooling  local constraints was  extended to
deriving lower  bounds on  the number of  levels of  the Sherali-Adams
hierarchy (\cite{SheraliA94}) starting with \cite{FernandezdlVKM07}.  This
framework has fueled several hierarchy-based gap bounds
of the past years; it was given the name ``from-local-to-global'' in
\cite{GeorgiouMagen}.

Recently, limitations on the approximation strength of extended
formulations were shown for the maximum clique problem
\cite{BraunFPS12,BravermanM13}.  In \cite{ChanLRS13} it is
proved that in terms of approximation, LPs of size $O(n^k)$ are
exactly as powerful as $O(k)$-level  relaxations of Sherali-Adams
hierarchy for maximum constraint satisfaction problems.

The  {\em metric capacitated  facility location}  problem (\cfl)  is a
well-studied problem for which, while constant factor approximations 
are known \cite{BansalGG12,AggarwalLBGGJ12}, 
no efficient LP relaxation with constant integrality gap is
known. The question whether  such a relaxation exists is among the most
important     open     problems     in    approximation     algorithms
\cite{ShmoysWbook}. An instance $I$ of \cfl\ is defined as follows.  We are given a set
$F$ of facilities and set $C$  of clients. We may open facility $i$ by
paying its opening cost $f_i$ and we may assign client $j$ to facility
$i$ by  paying the connection  cost $c_{ij}$. The 
latter costs satisfy the following variant of the triangle inequality:
$c_{ij} \leq c_{ij'} + c_{i'j'} + c_{i'j}$ for any $i, i'\in F$ and
$j, j' \in C.$
We  are asked to  open a
subset $F' \subseteq F$ of the facilities and assign each client to an
open  facility.  The  goal  is  to  minimize  the  total  opening  and
connection cost.

Apart from some previous work of the authors
\cite{KolliopoulosM13,KolliopoulosM14}, no other progress
towards the resolution of the question about the linear programming
approximability of \cfl\ has been made. In this paper we give further
negative evidence for this  notorious open problem by ruling out all
polynomially-sized relaxations that use the natural encoding with  
facility opening and client assignment variables.  This is  a quite
general family of relaxations that in the case of \cfl, or network
design problems in general, has been the focus of attention over the 
years.

\subsection{Our results}

In this paper we show unconditionally that at least an exponential number of
constraints is necessary 
for a relaxation of \cfl, that uses the natural encoding and has 
constant integrality gap. We do not make any assumptions on
the structure of the constraints such as locality or symmetry.

Our proof,  described at  a high level,  uses a simple  yet insightful
counting argument. We identify a large number of (fractional) vectors
such that for
each one of them, call it $s,$ there is an admissible cost vector for which 
$s$ induces a cost that is $o(\mbox{OPT}),$ where OPT is the cost of
the optimal integer solution with respect to the same cost vector.  Then we
show that  an arbitrary  valid inequality cannot  be violated  by more
than a small number of such fractional vectors. Thus by using the union bound we get
that  a large  number  of  inequalities is  needed  to separate  those
problematic  points from  the  feasible region.  A  similar idea  was
independently used by Kaibel  and Weltge in \cite{KaibelW13} to derive
lower bounds on the number of  facets of a polyhedron which contains a
given set  $X$ of integer  points and whose  set of integer  points is
$\operatorname{conv}(X)\cap  \mathbb{Z}^d$. We  note however  that for
problems  such  as  facility  location, the  known  polynomially-sized
relaxations   already   have    the   aforementioned   property.   Our
implementation  of the counting  argument is  more general  and allows
proofs for bounds on approximate relaxations that achieve a given gap quality
$g$.

A challenge in our proof is  showing  membership of a vector in the
convex hull of integer solutions  for \cfl\/. We overcome this problem
by building on the  elegant probabilistic framework that we introduced
in  \cite{KolliopoulosM14}. We  believe that  our techniques  apply to
other problems as well.

\section{The Method}

Here we present in detail our methodology which we will subsequently use
to derive results for metric capacitated facility location.

Let $Q=\{ x \in[0,1]^n \mid Ax\leq b\}$ be a linear relaxation and let $P=Q\cap
\{0,1\}^n$   be   the   set   of   integer  solutions  to   $Q$   and
$\operatorname{conv}(P)$ be the  convex hull of $P$ (we  will also say
that    $\operatorname{conv}(P)$   is   the    corresponding   integer
polytope). Our method consists of the following steps.

We design a family $\mathcal{I}$ of instances parameterized by the
dimension. For  each instance $I\in  \mathcal{I}$ of dimension  $n$ we
define a set of (exponentially many) points in $[0,1]^n$ which we call
the   \emph{core}   of  $I$.   We   denote   the   core  of   $I$   by
$\mathcal{C}_I$. We show that for  each $s \in \mathcal{C}_I$ there is
an admissible cost function $w_s$ such that $w_s^T s=o(Opt_{I,w_s})$
where $Opt_{I,w_s}$ is  the cost of the optimal  integer solution with
respect to $w_s$.

Then we prove that each   inequality $\pi$ of $Q$ can separate at
most      $\lambda$      members      of     $\mathcal{C_I}$      from
$\operatorname{conv}(P)$. We do so by the following argument: let $s_1
\in   \mathcal{C_I}$   be  a   vector   that   $\pi$  separates   from
$\operatorname{conv}(P).$ 
Then we
identify a set $\mathcal{U}\subseteq  \mathcal{C_I}$ such that for any
$s_2\in \mathcal{U}$ a convex combination  $s'$ of $s_1,s_2$ is also a
convex  combination of  integer  solutions. Note  here that  $s_1,s_2$
themselves are  not in $\operatorname{conv}(P)$. Thus  by validity and
by selecting the size of $\mathcal{U}$ to be independent from $s_1$ we
have      that      $\pi$      cannot     separate      more      that
$\lambda=|\mathcal{C_I}|-|\mathcal{U}|$ members of $\mathcal{C_I}$.

By     the     union     bound     we    get     that     at     least
$\frac{|\mathcal{C_I}|}{\lambda}$  inequalities are  needed  to separate
all members of  $\mathcal{C_I}$ from $\operatorname{conv}(P)$. Thus at
least that  many inequalities  are needed to  acquire a  relaxation of
constant gap. To derive exponential  bounds on the size of approximate
relaxations, $\lambda$ is needed to be
$\frac{|\mathcal{C_I}|}{2^{\Omega(n)}}$.

\section{Preliminaries}

In  what  follows,  we  use  the definition  of  a  $\rho$-approximate
relaxation   as    given   by   \cite{BraunFPS12}.    We   note   that
\cite{BraunFPS12} is concerned with extended approximate relaxations
that  use the  {natural encoding},  a  direction that  we do  not
pursue in this work.

Given a combinatorial optimization problem $T$, a {\em linear encoding} of
$T$ is  a pair $(L,O)$  where $L \subseteq  \{0, 1\}^*$ is the  set of
{\em feasible solutions} to the problem and $O\subset \mathbb{R}^*$ is the
set of {\em admissible objective functions.}  An instance of the linear
encoding is  a pair $(d,w)$ where  $d$ is a  positive integer defining
the dimension of the instance  and $w\subseteq O\cap \mathbb{R}^d $ is
the  set  of admissible  cost  functions  for  instances of  dimension
$d$. Solving  the instance  $(d,w)$ means finding  $x \in L  \cap \{0,
1\}^d$ such that  $w^{T}x$ is either maximum or  minimum, according to
the type of  problem under consideration.  Let $P=\operatorname{conv}(
\{x \in \{0, 1\}^d \mid x \in L \})$ be the integer polytope of dimension $d$.

Given a linear encoding $(L,O)$ of a maximization problem, and $\rho \geq
1$,  a  $\rho$-\emph{approximate  formulation  that uses  the  natural
  encoding} is a  formulation $Ax\leq b$ with $x\in \mathbb{R}^d$
such that

\begin{align*}
\max \{w^{T}x \mid Ax\leq b \} \geq & \max \{w^{T}x \mid x \in P\} &
  \mbox{ for all }  w \in \mathbb{R}^d \mbox{ and }\\
\max \{w^{T}x \mid Ax\leq b \} \leq  & \rho \max \{w^{T}x \mid x \in
  P\} & \mbox{ for all }  w \in O \cap \mathbb{R}^d.
\end{align*}
 For a minimization problem,  we require 
\begin{align*}
\min \{w^{T}x \mid Ax\leq b \} \leq & \min \{w^{T}x \mid x \in P\}   
& \mbox{ for all }  w \in \mathbb{R}^d \mbox{ and }\\
\min \{w^{T}x \mid Ax\leq b \}  \geq &  \rho^{-1} \min \{w^{T}x \mid x \in
P\} & \mbox{ for all }  w \in O \cap \mathbb{R}^d.
\end{align*}

\section{Bounds for \cfl\ }\label{section:cfl}

In the case  of \cfl, the
linear encoding $(L, O)$ is defined as follows. For a \cfl\ instance, given
the number $n$ of facilities, the number $m$ of clients, the
capacities    $K\in   \mathbb{R}^n_+$    and    the   demands    $D\in
\mathbb{R}^m_+$, 
we use the variables $y_i,$
$i=1,\ldots,n,$  $x_{ij},$  $i=1,\ldots,n,$  $j=1,\ldots,m$  with  the
usual  meaning of facility opening and client assignment respectively. 
The set of feasible solutions $(y,x)$ is defined 
in the obvious manner. 
Thus for dimension $d=n + nm,$ 
$L \cap \{0,1\}^d$ is completely determined by the quadruple
$(n,m,K,D).$ The
set of admissible objective functions $O \cap \mathbb{R}^{n+nm}$ is the set of pairs
$({\bf f}, {\bf c})$ where ${\bf f} \in \mathbb{R}^n_+$ are the facility
opening costs and ${\bf c}=[c_{ij}] \in \mathbb{R}^{nm}_+$ are 
 connection costs  that satisfy  $c_{ij}\leq
c_{i'j}+c_{i'j'}+c_{ij'}$.   

In our proof we will consider feasible sets of the form $(n,m, U{\bf
  1},{\bf 1}),$ 
i.e., with uniform
capacities $U >0,$ and unit demands. Therefore the triple $(n,m,U)$ is
sufficient description. 
Furthermore, it will be convenient to deviate from the convention that the
number  of facilities is  $n$ --  this is  to simplify  the expressions
appearing through the proof. Let the number of facilities be
$n^2,$ the number of clients be $an^4$ for some integer $a\geq 2$
and the capacity $U$ of each facility be $n^3.$ Thus for a given $n,$
the feasible set is uniquely determined by the triple
$(n^2,an^4,n^3).$ To  avoid cumbersome expressions,  we slightly abuse
terminology and refer to such a
triple as 
an {\em instance $I(n^2,an^4,n^3).$} We denote for the instance
in question the set of
facilities by $F$ and the set of clients by $C$.

We first describe the core
$\mathcal{C}_I$ of the instance $I(n^2,an^4,n^3)$.
\begin{definition}
The  core $\mathcal{C}_I$  of  the instance  $I(n^2,an^4,n^3)$ is  the
following set  of $(y,x)$ vectors.  $\forall k,l \subset F$ with
$|k|,|l| =  n$ and  $k\cap l  =\emptyset$ and for  a set  $C_{k,l}$ of
clients with $|C_{k,l}|=Un+1$ we define a vector $s_{k,l}$ such that: 
(1) $y_{i}=1, \forall i\in k$, $y_{i}=\frac{10}{n^2}, \forall i\in l$,
$y_{i}=1, \forall i\notin k\cup l$. (2) For a client $j\in C_{k,l}$ we have
$x_{ij}=\frac{1-1/n^2}{n},   \forall  i\in   k$,  $x_{ij}=\frac{1}{n^3},
\forall i\in l$ and $x_{ij}=0, \forall i\notin k\cup l$.  (3) For a client
$j\notin  C_{k,l}$ we  have  $\forall i\in  k\cup  l$, $x_{ij}=0$  and
$\forall i\notin k\cup l$, $x_{ij}=\frac{1}{n^2-2n}$.
\end{definition}

We say that two vectors
 $s_{k,l},s_{k',l'} \in \mathcal{C}_I$ \emph{collide} with each other if 
$l \setminus (k' \cup l') \neq \emptyset$ and $l' \setminus(k\cup
l)\neq \emptyset$. We proceed by proving that for each  $s \in
\mathcal{C}_I$ the ratio of the number of the members of $\mathcal{C}_I$ that do not 
collide with $s$  to the number of the colliding members is exponentially small.

\begin{lemma}\label{lemma:estimate-conflicts}
For each $s_{k,l} \in \mathcal{C}_I$ let $\mathcal{U} \subseteq \mathcal{C}_I$ be the set of vectors in the core that collide with $s_{k,l}$. Then $\frac{|\mathcal{C}_I|-|\mathcal{U}|}{|{\mathcal{C}_I}|}=2^{-\Omega(n \log n)}$.
\end{lemma}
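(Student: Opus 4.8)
The plan is to first pin down the combinatorial structure of the core.  A member of $\mathcal{C}_I$ is essentially determined by an ordered pair of disjoint $n$-subsets $(k,l)$ of $F$, where $|F|=n^2$ (the choice of $C_{k,l}$ will only contribute lower-order factors, so I will keep it fixed or treat it as a harmless multiplicative term throughout).  Hence $|\mathcal{C}_I| = \binom{n^2}{n}\binom{n^2-n}{n}$ up to the $C_{k,l}$ factor.  Fix one vector $s_{k,l}$.  I want to show that the number of pairs $(k',l')$ that do \emph{not} collide with $(k,l)$ — i.e.\ for which $l\setminus(k'\cup l')=\emptyset$ \emph{or} $l'\setminus(k\cup l)=\emptyset$ — is an exponentially small fraction of $|\mathcal{C}_I|$.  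By the union bound it suffices to bound each of the two failure events separately and show each is a $2^{-\Omega(n\log n)}$ fraction.

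\textbf{Bounding the two non-collision events.}  Consider first the event $l'\subseteq k\cup l$: the second block $l'$ of the new vector is entirely contained in the fixed $2n$-element set $k\cup l$.  The number of such $(k',l')$ is $\binom{2n}{n}$ choices for $l'$ times at most $\binom{n^2}{n}$ choices for $k'$, so the fraction of the core hit by this event is at most
\[
\frac{\binom{2n}{n}\binom{n^2}{n}}{\binom{n^2}{n}\binom{n^2-n}{n}}
=\frac{\binom{2n}{n}}{\binom{n^2-n}{n}}.
\]
Now $\binom{2n}{n}\le 4^{n}=2^{O(n)}$ while $\binom{n^2-n}{n}\ge\left(\frac{n^2-n}{n}\right)^{n}=(n-1)^{n}=2^{\Omega(n\log n)}$, so this ratio is $2^{-\Omega(n\log n)}$.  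The event $l\subseteq k'\cup l'$ is symmetric in flavour but needs one extra step, since here $l$ is fixed and we are choosing $k',l'$: we must count pairs $(k',l')$ with $l\subseteq k'\cup l'$.  Split $l$ according to how many of its elements land in $k'$ versus $l'$; in every case $l$ is covered by the union of an $n$-set and an $n$-set, so writing $k'\cup l'=:T$ with $|T|=2n$ and $l\subseteq T$, the number of admissible $T$ is $\binom{n^2-n}{n}$ (choose the $n$ elements of $T$ outside $l$), and each $T$ yields $\binom{2n}{n}$ ways to split into $(k',l')$.  This gives a count of $\binom{n^2-n}{n}\binom{2n}{n}$, and dividing by $|\mathcal{C}_I|=\binom{n^2}{n}\binom{n^2-n}{n}$ again leaves $\binom{2n}{n}/\binom{n^2}{n}=2^{-\Omega(n\log n)}$.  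Adding the two bounds and absorbing the constant $2$ completes the estimate.

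\textbf{Where the care is needed.}  The routine part is the asymptotics of the binomial ratios; the part that needs attention is making sure the definition of ``collide'' is handled exactly.  Collision requires \emph{both} $l\setminus(k'\cup l')\neq\emptyset$ and $l'\setminus(k\cup l)\neq\emptyset$, so non-collision is the union of the two containment events above — I must be careful that I have the direction of each containment right ($l'$ against the \emph{fixed} $k\cup l$, and $l$ against the \emph{variable} $k'\cup l'$) and that I am not accidentally also requiring $k',l'$ disjoint in a way that changes the count (disjointness only \emph{shrinks} the numerator, so upper bounds are unaffected).  A secondary point is to confirm that reinstating the choice of $C_{k,l}$ contributes the same multiplicative factor to numerator and denominator — it does, since the set of available clients has size $an^4$ independent of $(k,l)$ — so it cancels and does not affect the ratio.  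Once these bookkeeping issues are dispatched, the bound $\frac{|\mathcal{C}_I|-|\mathcal{U}|}{|\mathcal{C}_I|}=2^{-\Omega(n\log n)}$ follows directly.
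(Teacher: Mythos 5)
Your proof is correct and follows essentially the same route as the paper's: non-collision is the union of the two containment events $l'\subseteq k\cup l$ and $l\subseteq k'\cup l'$, each of which is bounded as a $2^{-\Omega(n\log n)}$ fraction of the core, and the union bound finishes. The only difference is cosmetic: the paper upper-bounds each event probabilistically via sampling with repetition, giving $(2/n)^n$, while you count pairs exactly with binomial coefficients; your treatment is in fact slightly more careful on the second (asymmetric) event, where $l$ is fixed and $k'\cup l'$ varies, which the paper dismisses as ``the same'' by symmetry.
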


\begin{proof}
We lower-bound the ratio in question by upper bounding the probability
that a member  of $\mathcal{C}_I$ picked uniformly at  random does not
collide with  $s_{k,l}$. Consider  the event $\mathcal{E}_1$  that $l'
\setminus  (k\cup  l)   =  \emptyset$.  It  must  be   the  case  that
$l'\subseteq k\cup
l$.  The probability $P[\mathcal{E}_1]$  is at most
$(\frac{2n}{n^2})^n=(2/n)^n$ -- this is the probability
that all members of $l'$ are in $k\cup l$ if we were to pick them with
repetition and  the probability of the actual  $\mathcal{E}_1$ is less
since  we  do not  allow  repetitions in  the  set  $l$. Likewise  the
probability of the event $\mathcal{E}_2$ that $l \setminus (k'\cup l') =
\emptyset$ is the same. So, by the union bound, the probability that a 
randomly  picked  element of  $\mathcal{C}_I$  does  not collide  with
$s_{k,l}$ is $P[\mathcal{E}_1\cup \mathcal{E}_2]\leq 2(2/n)^n$.
\end{proof}

Next we show that for any two colliding vectors $s_{k,l}$ and
$s_{k',l'}$ in  $\mathcal{C}_I$
there is a convex combination $s'$ of them that is contained in  the integer polytope.

\begin{lemma}\label{lemma:conflict}
For any two colliding   vectors $s_{k,l},s_{k',l'}\in \mathcal{C}_I,$  
$\operatorname{conv}(\{s_{k,l},s_{k',l'}\}) \cap \operatorname{conv}(P) \neq \emptyset$.
\end{lemma}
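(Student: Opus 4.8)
The plan is to exhibit an explicit convex combination $s' = \alpha s_{k,l} + (1-\alpha)s_{k',l'}$ and then prove $s'\in\operatorname{conv}(P)$ by writing $s'$ directly as a convex combination of feasible integral $(y,x)$-vectors. First I would look at what the two core vectors have in common: outside $k\cup l$ (resp.\ $k'\cup l'$) every facility is fully open with $y_i=1$, and the ``bulk'' clients $j\notin C_{k,l}$ are spread uniformly over those $n^2-2n$ facilities at level $1/(n^2-2n)$. The small fractional values $10/n^2$ on $l$ and $1/n^3$ on the $C_{k,l}$-clients are what prevent $s_{k,l}$ itself from lying in $\operatorname{conv}(P)$, so the combination must be chosen so that these ``defects'' from the two vectors can be rounded against each other. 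The collision hypothesis — that $l$ has a facility outside $k'\cup l'$ and $l'$ has a facility outside $k\cup l$ — is precisely what gives room to do this: a facility $i^\ast\in l\setminus(k'\cup l')$ has value $10/n^2$ in $s_{k,l}$ but value $1$ in $s_{k',l'}$ (it is a ``bulk'' facility for the primed vector), and symmetrically for a facility in $l'\setminus(k\cup l)$.

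The key steps, in order: (i) fix $i^\ast\in l\setminus(k'\cup l')$ and $i'^\ast\in l'\setminus(k\cup l)$ from the collision hypothesis; (ii) choose the mixing weight $\alpha$ (I expect $\alpha=1/2$, or possibly a weight tuned so the $y$-values on $l,l'$ and on the exceptional facilities come out to convenient rationals); (iii) describe a distribution over integral solutions of the instance $I(n^2,an^4,n^3)$ whose expectation is $s'$. For the $y$-coordinates this means: facilities in $(k\cup l\cup k'\cup l')^c$ are always open; facilities in $k\cup k'$ are always open; each facility in $l$ (resp.\ $l'$) is open with probability $\tfrac12\cdot\tfrac{10}{n^2}$ plus whatever it inherits as a bulk facility of the other vector — in particular $i^\ast$ and $i'^\ast$ get a substantial opening probability. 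For the $x$-coordinates I would route the $C_{k,l}$-clients: in the $s_{k,l}$ ``half'' they are served fractionally by $k\cup l$, and since $|C_{k,l}|=Un+1 = n^3\cdot n+1$ slightly exceeds the total capacity $Un$ of $n$ facilities, one extra unit of demand must spill over — this is exactly the role of the tiny $1/n^3$ mass on $l$, and the spill can be absorbed by $i^\ast$ (capacity $U=n^3$, freely available on the primed side). Symmetrically for $C_{k',l'}$ and $i'^\ast$. The bulk clients, being already uniform over a large fully-open facility set in both vectors, are routed by a standard rounding of a fractional assignment into integral assignments respecting capacity $U=n^3$ (the per-facility load $\approx m/(n^2-2n)$ is comfortably below $U$, so this is an easy transportation/flow rounding).

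I expect the main obstacle to be the capacity bookkeeping for the $C_{k,l}$ clients: one has to verify that the single unit of overflow demand (the ``$+1$'' in $Un+1$) really can be charged, over the random choice of integral solution, to a facility that (a) has spare capacity and (b) has enough opening probability in $s'$ to support being the recipient — and that this is consistent with $i^\ast$ also carrying its share of bulk clients. This is where the collision condition and the specific constants ($10/n^2$, $1/n^3$, $|C_{k,l}|=Un+1$) are calibrated, and getting all the marginals to match $s'$ exactly (rather than approximately) is the delicate part; I would handle it by invoking the probabilistic framework of \cite{KolliopoulosM14} to package the assignment rounding, so that only the finitely many ``exceptional'' coordinates on $l\cup l'\cup\{i^\ast,i'^\ast\}$ need to be checked by hand. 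The remaining steps — feasibility of each integral solution in the support, and the expectation computation coordinate by coordinate — are then routine.
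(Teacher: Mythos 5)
Your plan is essentially the paper's own proof: the paper also takes the midpoint $s'=\tfrac12(s_{k,l}+s_{k',l'})$ and exhibits a distribution over integer solutions with expectation $s'$, using exactly the two collision facilities $f\in l\setminus(k'\cup l')$ and $g\in l'\setminus(k\cup l)$ to absorb the spill-over of $C_{k,l}$ (resp.\ $C_{k',l'}$) and to make the $y$-marginals on $l\cup l'$ come out right. The ``delicate part'' you defer is supplied in the paper by an explicit pair of symmetric two-step experiments (open exactly one facility of $l$ per trial, with $f$ taking the residual probability $1-\sum_{i\in l-\{f\}}y^{s_{k,l}}_i$, and open $g$ with probability $\sum_{i\in l}y^{s_{k,l}}_i$ on the bulk side), after which the marginal computations are routine; so your approach is correct and matches the paper's.
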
 

\begin{proof}
We will actually show that the average vector
$s'=\frac{s_{k,l}+s_{k',l'}}{2}$  is a  convex combination  of integer
solutions. We will do show by giving a distribution $\mathcal{D}$ over
integer    solutions    whose    expected   vector    $(y^\mathcal{D},
x^\mathcal{D})$ 
with respect to $\mathcal{D}$
is  $s'$. The  intuition behind  the  proof is  that each  one of  the
vectors $s_{k,l},s_{k',l'}$, in order to become a convex combination
of  integer solutions,  needs what  the other  has in  abundance: some
measure for the $y$ variable of a facility in $l$ and some measure for
the $y$ variable of a  facility in $l'$ respectively. With probability
$1/2$ we choose  to perform experiment $A$ and  with probability $1/2$
we perform experiment $B$ described below. 

Suppose that  experiment $A$  is chosen. We  will describe  the random
solution in two steps: $A_1$ and $A_2$.  We describe first step $A_1.$ 
Let $f$ be a member of the
set $l
\setminus (k'\cup l')$, which is non-empty by the choice of
$k,l,k',l'$. We select exactly one  facility to be opened from the set
$l$ according to to the following probabilities:  for $i\in l-\{f\}$
the probability is  equal to $y^{s_{k,l}}_i=\frac{10}{n^2}$, while for
facility   $f$    the   probability   is    equal   to   $1-\sum_{i\in
  l-\{f\}}y^{s_{k,l}}_{i}$. Facilities in $k$ are always opened in the
experiment step $A_1$. When some  facility $i\in l-\{f\}$ is chosen we
randomly               select              $w^i_{A_1}=\frac{\sum_{j\in
    C_{k,l}}x^{s_{k,l}}_{ij}}{y^{s_{k,l}}_i}$  clients  from $C_{k,l}$
and assign them to $i$ -- we assume without loss of generality that
$w^{i}_{A_1}$ is  an integer,  in the Appendix  we show how  to handle
fractional $w$'s.  Assign the remaining clients  in $C_{k,l}$ randomly
to  the facilities  $i'\in k$  so that  each one  is  assigned exactly
$w^{i'}_{A_1}=\frac{|C_{k,l}|-w^i_{A_1}}{n}$ clients  (again we assume
w.l.o.g. that  this is  an integer).  When  facility $f$ is  chosen we
randomly               select              $w^f_{A_1}=\frac{\sum_{j\in
    C_{k,l}}x^{s_{k,l}}_{fj}}{1-\sum_{i\in    l-\{f\}}y^{s_{k,l}}_{i}}$
clients from  $C_{k,l}$ and assign  them to $f$. Assign  the remaining
clients in $C_{k,l}$ randomly to the facilities $i'\in k$ so that each
one  is assigned  exactly $w^{i'}_{A_1}=\frac{|C_{k,l}|-w^f_{A_1}}{n}$
clients (again we assume w.l.o.g. that the $w$'s are integers).

For the second step $A_2$ of the experiment, let $g$ be a facility in
$l'\setminus (k\cup l)$.  We select facility $g$ to be opened with a
probability $\sum_{i\in  l}y^{s_{k,l}}_{i}$, the other  facilities in
$F-(k\cup l)$ are always opened in the experiment step $A_2$. If $g$
is            opened,             it            is            assigned
$w^g_{A_2}=\frac{\sum_{j}x^{s_{k,l}}_{gj}}{\sum_{i\in
    l}y^{s_{k,l}}_{i}}$ clients  randomly chosen from  $C-C_{k,l}$ and
the  remaining clients  of $C-C_{k,l}$  are assigned  randomly  to the
facilities $i'$  in $F-(k\cup l)-\{g\}$  so that each one  is assigned
exactly            $w^{i'}_{A_2}=\frac{|C-C_{k,l}|-w^g_{A_2}}{|F-(k\cup
  l)-\{g\}|}$. If  $g$ is not  opened, all the clients  in $C-C_{k,l}$
are assigned  randomly  to the
facilities $i'$  in $F-(k\cup l)-\{g\}$  so that each one  is assigned
exactly            $w^{i'}_{A_2}=\frac{|C-C_{k,l}|}{|F-(k\cup
  l)-\{g\}|}$.

Now suppose that  experiment $B$ is chosen. This  case is symmetric to
the  previous  experiment  by   exchanging  sets  $k,l$  with  $k',l'$
respectively  but  we  give  the  full description  for  the  sake  of
completeness.  Again we will describe the random solution in two steps
$B_1$ and $B_2$.  

We descibe first step $B_1.$ Let $g$ be the member of $l' \setminus (k\cup l )$
that we used in 
step $A_2$. We  select exactly one facility to be  opened from the set
$l'$ with respect to the following probabilities: for $i'\in l'-\{g\}$
the probability is equal to $y^{s_{k',l'}}_{i'}=\frac{10}{n^2}$, while
for   facility  $g$   the  probability   is  equal   to  $1-\sum_{i\in
  l'-\{g\}}y^{s_{k',l'}}_{i}$. Facilities in $k'$ are always opened in
the  experiment step  $B_1$. When  some facility  $i'\in  l'-\{g\}$ is
chosen     we    randomly     select    $w^{i'}_{B_1}=\frac{\sum_{j\in
    C_{k',l'}}x^{s_{k',l'}}_{i'j}}{y^{s_{k',l'}}_{i'}}$  clients  from
$C_{k',l'}$  and assign  them to  $i'$  -- assume  again w.l.o.g.  that
$w^{i'}_{B_1}$  is an  integer.  Assign  the rest  of  the clients  in
$C_{k',l'}$ randomly to the facilities $i''\in k'$ so that each one is
assigned   exactly  $w^{i''}_{B_1}=\frac{|C_{k',l'}|-w^{i'}_{B_1}}{n}$
clients  (again we  assume w.l.o.g.  that this  is an  integer).  When
facility      $g$      is      chosen     we      randomly      select
$w^{g}_{B_1}=\frac{\sum_{j\in
    C_{k',l'}}x^{s_{k',l'}}_{gj}}{1-\sum_{i\in
    l'-\{g\}}y^{s_{k',l'}}_{i}}$  clients from $C_{k',l'}$  and assign
them to $g$. Assign the rest of clients in $C_{k',l'}$ randomly to the
facilities  $i''\in   k'$  so  that  each  one   is  assigned  exactly
$w^{i''}_{B_1}=\frac{|C_{k',l'}|-w^{g}_{B_1}}{n}$  clients  (again  we
assume w.l.o.g. that the $w$'s are integers).

For the second step $B_2$ of the experiment, let $f$ be the facility in
$l \setminus (k'\cup l')$ used in step $A_1$. We select facility $f$ to
be  opened  with  a  probability $\sum_{i\in  l}y^{s_{k,l}}_{i}$,  the
other  facilities  in  $F-(k'\cup  l')$  are  always  opened  in  the
experiment step $B_2$.  If $f$ is opened, it is assigned
$w^f_{B_2}=\frac{\sum_{j}x^{s_{k',l'}}_{fj}}{\sum_{i\in
    l}y^{s_{k,l}}_{i}}$ clients randomly chosen from $C-C_{k',l'}$ and
the remaining  clients of $C-C_{k',l'}$  are assigned randomly  to the
facilities $i'$ in $F-(k'\cup l')-\{f\}$  so that each one is assigned
exactly         $w^{i'}_{B_2}=\frac{|C-C_{k',l'}|-w^f_{B_2}}{|C-(k'\cup
  l')-\{f\}|}$. If $f$ is not opened, all the clients in $C-C_{k',l'}$
are assigned  randomly  to the
facilities $i'$  in $F-(k'\cup l')-\{f\}$  so that each one  is assigned
exactly            $w^{i'}_{B_2}=\frac{|C-C_{k',l'}|}{|F-(k'\cup
  l')-\{f\}|}$.

It is easy to see that the outcome of each experiment is always a feasible
integer solution, since all  clients are assigned to opened facilities
and the capacities  are respected by the choice of  $w$'s.  It is also
easy to  verify that $s'$ is  the expected vector  of the distribution
$\mathcal{D}$ defined above. A facility $i \in F-(k\cup l \cup k' \cup
l')$    is   always    opened   in    both   experiments    and   thus
$y^{\mathcal{D}}_i=y'_i=1$. A facility $i' \in (k\cup l \cup k'\cup
l')   -(\{f,g\})$   is   opened   in   experiment   $A$   a   fraction
$y^{s_{k,l}}_{i'}$  of the  time and  is  opened in  experiment $B$  a
fraction  $y^{s_{k',l'}}_{i}$ of  time, and  since each  experiment is
selected       with       $1/2$       probability,       we       have
$y^{\mathcal{D}}_{i'}=\frac{y^{s_{k,l}}_{i'}+y^{s_{k',l'}}_{i'}}{2}=y^{s'}_{i'}$. For
facility $f$ we have that in experiment $A$ it is opened $1-\sum_{i\in
  l-\{f\}}y^{s_{k,l}}_i$  while   in  experiment  $B$   it  is  opened
$\sum_{i\in l}y^{s_{k,l}}_i=\sum_{i\in l'}y^{s_{k',l'}}_i$ of the time
and  so  $y^{\mathcal{D}}_f=1/2+y^{s_{k,l}}_f/2=y^{s'}_f$.  Similarly  for
facility $g$. By similar arguments the desired properties can be shown for the
assignment variables.  Let us consdier for example, for facility $f\in F-(k'\cup l')$
we have  that in  $A_1$ the  expected total demand  assigned to  it is
$w^{f}_{A_1}P[y_f=1]=\sum_{j\in  C_{k,l}}x^{s_{k,l}}_{fj}$ and  by the
symmetric way that  clients in $C_{k,l}$ are assigned  to $f$ in $A_1$
we   have  that  $x^{A_1}_{fj}=x^{s_{k,l}}_{fj}$   for  all   $j$.  In
experiment step  $B_2$, the expected  total demand assigned to  $f$ is
$w^{f}_{B_2}P[y_f=1]=\sum_{j\in C-C_{k',l'}}x^{s_{k',l'}}_{fj}$ and by
the symmetric way that clients in $C-C_{k',l'}$ are assigned to $f$ in
$B_2$ we have that $x^{B_2}_{fj}=x^{s_{k',l'}}_{fj}$ for all $j$. Thus
$x^{\mathcal{D}}_{fj}=\frac{x^{s_{k,l}}_{fj}+x^{s_{k',l'}}_{fj}}{2}$.
\end{proof}

\begin{theorem}\label{theorem::cfl}
Every approximate formulation for metric \cfl\ that uses the natural encoding and has integrality gap at most $g$ for some constant $g>0$, has $2^{\Omega (n\log n)}$ constraints.
\end{theorem}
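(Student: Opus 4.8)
The plan is to combine Lemmas~\ref{lemma:estimate-conflicts} and~\ref{lemma:conflict} with the counting argument described above; the only missing ingredient is, for each core vector $s$, an admissible cost vector $w_s$ witnessing $w_s^{T}s=o(Opt_{I,w_s})$. Fix $s=s_{k,l}\in\mathcal{C}_I$ and define $w_s=(\mathbf{f},\mathbf{c})$ by: $f_i=1$ for $i\in l$ and $f_i=0$ otherwise; partition $F\cup C$ into $G_1=(k\cup l)\cup C_{k,l}$ and $G_2=(F\setminus(k\cup l))\cup(C\setminus C_{k,l})$, and set $c_{ij}=0$ when $i$ and $j$ lie in the same part and $c_{ij}=1$ otherwise. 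This $w_s$ is admissible: $c_{ij}\le c_{i'j}+c_{i'j'}+c_{ij'}$ holds since, if the left side equals $1$, then $i,j$ lie in different parts and the three pairs on the right cannot all be monochromatic. Under $w_s$ the vector $s_{k,l}$ pays $\sum_{i\in l}\frac{10}{n^2}=\frac{10}{n}$ in facility cost and $0$ in connection cost, because in $s_{k,l}$ every client of $C_{k,l}$ is served only inside $k\cup l\subseteq G_1$ and every other client only inside $F\setminus(k\cup l)\subseteq G_2$. For the integral optimum: the $n$ facilities of $k$ have total capacity $n\cdot n^3=n^4<n^4+1=|C_{k,l}|$, so in every integral solution at least one client of $C_{k,l}$ is assigned to a facility of $l$ (opening cost $1$) or to a facility outside $k\cup l$ (connection cost $1$), giving $Opt_{I,w_s}\ge 1$; and the solution that opens all facilities outside $l$ together with one facility of $l$, routing $C_{k,l}$ within $k\cup l$ and $C\setminus C_{k,l}$ within $F\setminus(k\cup l)$, is feasible for large $n$ and costs exactly $1$. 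Hence $Opt_{I,w_s}=1$ and $w_s^{T}s_{k,l}=\frac{10}{n}\,Opt_{I,w_s}$.

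Now let $Q=\{x\mid Ax\le b\}$ be any $g$-approximate formulation in the natural encoding for $I(n^2,an^4,n^3)$, and take $n>10g$. If some $s_{k,l}\in Q$, then $\min\{w_s^{T}x\mid Ax\le b\}\le w_s^{T}s_{k,l}=\frac{10}{n}\,Opt_{I,w_s}<g^{-1}\min\{w_s^{T}x\mid x\in P\}$, contradicting the defining inequality $\min\{w_s^{T}x\mid Ax\le b\}\ge g^{-1}\min\{w_s^{T}x\mid x\in P\}$ of a $g$-approximate formulation. So every core vector lies outside $Q$, hence is violated by some row of $Ax\le b$. Consider one such row $\pi^{T}x\le\pi_0$; since $\operatorname{conv}(P)\subseteq Q$, it is valid on $\operatorname{conv}(P)$. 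If it violates $s_{k,l}$, then for every $s_{k',l'}$ colliding with $s_{k,l}$ Lemma~\ref{lemma:conflict} yields $\frac12(s_{k,l}+s_{k',l'})\in\operatorname{conv}(P)$, so $\pi^{T}s_{k',l'}\le 2\pi_0-\pi^{T}s_{k,l}<\pi_0$ and the row does not violate $s_{k',l'}$. Thus a row violating some core vector $s$ violates none of the core vectors colliding with $s$, hence at most $|\mathcal{C}_I|-|\mathcal{U}_s|$ of them, which by Lemma~\ref{lemma:estimate-conflicts} is $|\mathcal{C}_I|\cdot 2^{-\Omega(n\log n)}$.

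Finally, since all $|\mathcal{C}_I|$ core vectors must be violated and each row handles at most a $2^{-\Omega(n\log n)}$ fraction of them, the union bound forces $Ax\le b$ to contain at least $|\mathcal{C}_I|/(|\mathcal{C}_I|\cdot 2^{-\Omega(n\log n)})=2^{\Omega(n\log n)}$ rows, which is the theorem. The step I expect to be the main obstacle is the construction in the first paragraph: one must produce a genuinely metric cost vector that at once zeroes out the connection cost of \emph{every} core vector and converts the single-unit capacity shortfall of $C_{k,l}$ relative to $k$ into a $\Theta(n)$ gap between the fractional value $\frac{10}{n}$ and the integral optimum $1$; once $w_s$ and $Opt_{I,w_s}$ are fixed, what remains is a routine assembly of Lemmas~\ref{lemma:estimate-conflicts} and~\ref{lemma:conflict} with the union bound.
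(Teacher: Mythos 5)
Your proposal is correct and follows essentially the same route as the paper's proof: the same two-cluster cost function (facilities of $k\cup l$ co-located with $C_{k,l}$ at distance $1$ from everything else, opening cost $1$ only on $l$), the same capacity-shortfall argument giving $Opt_{I,w_s}\ge 1$ versus fractional cost $10/n$, and the same use of Lemmas~\ref{lemma:estimate-conflicts} and~\ref{lemma:conflict} with the union bound. Your write-up is merely more explicit in a few places (verifying the relaxed triangle inequality, the inequality $\pi^{T}s_{k',l'}\le 2\pi_0-\pi^{T}s_{k,l}<\pi_0$), but the argument is the paper's.
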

\begin{proof}
We first prove that for every vector $s_{k,l} \in \mathcal{C}_I$
there   is   an  admissible   cost   function   $w_{k,l}$  such   that
$w_{k,l}^{T}s_{k,l}=o(w_{k,l}^Ts^{opt}_{k,l})$ where $s^{opt}_{k,l}$ is an optimal
integer    solution    of    $I(n^2,an^4,n^3)$   with    respect    to
$w_{k,l}$. Consider  two points $p_1,p_2$  in some Euclidean  space at
distance $1$ from each other. At the first point $p_1$, the facilities
of $k\cup l$ and the clients of $C_{k,l}$ are co-located and the remaining
facilities  and  clients are  all  co-located  at $p_2$.  Additionally
the facilities in $l$ have all opening cost of $1$ and the rest have $0$
opening cost.   It is easy  to see that  every integer solution  has a
cost of at  least $1$: either some client $j  \in C_{k,l}$ is assigned
to some facility located at $p_2$ and thus incurs a connection cost of
$1$, or some costly facility in $l$ must be opened at $p_1$, incurring a
facility cost of $1$. On the other hand $w_{k,l}^{T}s_{k,l}=o(1).$

Consider some  inequality $\pi$ of a $g$-approximate
relaxation $Q,$ where $g >0$ is a constant.  
(In fact the proof holds for $g=o(n)$).
Suppose there is some $s_{k,l}\in \mathcal{C}_I$  that
violates 
$\pi.$  Then, for every $s'_{k',l'}\in \mathcal{C}_I$ which
collides with $s_{k,l}$, $\pi$ must be satisfied otherwise by Lemma
\ref{lemma:conflict} we have violation of validity. By Lemma
\ref{lemma:estimate-conflicts} we have that $\pi$ eliminates 
$2^{-\Omega(n \log n)}|\mathcal{C}_I|$ members of the core, and by using the
union bound the theorem  is proved. We note that for the sake of simplicity  the
parameters are not optimized  -- by using a
different core we can get tighter bounds.
\end{proof}

\bibliographystyle{plain}
\bibliography{../SA_for_CFL/bibliography-ver1}

\appendix
\section{Appendix to Section~\ref{section:cfl}}

Here we explain how to handle  fractional bin capacities in the proof
of  Lemma~\ref{lemma:conflict}.

To handle the case where the  $w$'s are not integers (which is
actually always the case), we simply do the
following. We will give the proof  for step $A_1$ -- the proofs for the
other steps are similar. Each  time facility $f$ (or $i\in l-\{f\}$)
is  selected to be opened, the number of the clients that are randomly selected
 to be assigned to it is $\lfloor w^{f}_{A_1}  \rfloor$ ($\lfloor  w^i_{A_1} \rfloor$)  
with probability
$1-(w^f_{A_1}-\lfloor  w^f_{A_1}   \rfloor  )$  ($1-(w^i_{A_1}-\lfloor
w^i_{A_1} \rfloor  )$), otherwise the number of clients is $\lceil w^f_{A_1}
\rceil$($\lceil  w^i_{A_1} \rceil$).  If the  number of  clients assigned to $f$
($i$)  is  selected  to be  $\lfloor w^f_{A_1}  \rfloor$  ($\lfloor  w^i_{A_1}
\rfloor$)  then we  randomly select  $n(\frac{|C_{k,l}|-\lfloor  w^f_{A_1} \rfloor
}{n} - \lfloor (\frac{|C_{k,l}|-\lfloor  w^f_{A_1} \rfloor }{n}) \rfloor)$ (
$n(\frac{|C_{k,l}|-\lfloor    w^i_{i_{A_1}}   \rfloor    }{n}    -   \lfloor
(\frac{|C_{k,l}|-\lfloor  w^i_{A_1} \rfloor  }{n}) \rfloor)$)  facilities in
$k$ at random  and set the number of clients assigned to them $\lceil
\frac{|C_{k,l}|-\lfloor    w^f_{A_1}   \rfloor   }{n}    \rceil$   ($\lceil
(\frac{|C_{k,l}|-\lfloor w^i_{A_1}  \rfloor }{n} \rceil$) and set the number of
of clients assigned to the remaining facilities in $k$ to $\lfloor(\frac{|C_{k,l}|-\lfloor
  w^f_{A_1} \rfloor }{n} \rfloor$($\lfloor(\frac{|C_{k,l}|-\lfloor w^i_{A_1}
  \rfloor  }{n} \rfloor$).   Otherwise  select some  $n(\frac{|C_{k,l}|-\lceil
  w^f_{A_1} \rceil  }{n} - \lfloor  (\frac{|C_{k,l}|-\lceil w^f_{A_1} \rceil
}{n}) \rfloor)$ ( $n(\frac{|C_{k,l}|-\lceil  w^i_{A_1} \rceil }{n} - \lfloor
(\frac{|C_{k,l}|-\lceil w^i_{A_1} \rceil  }{n}) \rfloor)$) facilities in $k$
at  random and  set the number 
of clients assigned to them to  $\lceil
(\frac{|C_{k,l}|-\lceil    w^f_{A_1}    \rceil    }{n}   \rceil$    ($\lceil
(\frac{|C_{k,l}|-\lceil  w^i_{A_1} \rceil  }{n} \rceil$)  and set the number of
of clients assigned to the rest of them  $\lfloor(\frac{|C_{k,l}|-\lceil w^f_{A_1} \rceil }{n} \rfloor
$($\lfloor(\frac{|C_{k,l}|-\lceil w^i_{A_1} \rceil }{n} \rfloor$). Note that
 the  expected  vector  is as  in  the  proof of  Lemma
\ref{lemma:conflict}.

\end{document}